\documentclass{article}

% Packages
\usepackage{amsmath}
\usepackage{amssymb}
\usepackage{epigraph}
\usepackage{MnSymbol}
\usepackage[amsmath,thmmarks]{ntheorem}

% Title
\title{\bf A Formal Sociologic Study of Free Will}
\author{%
  Giovanni Giuffrida\\ {\it University of Catania}\\ \texttt{giovanni.giuffrida@dmi.unict.it}
  \and 
  Calogero G. Zarba \\ {\it Neodata Intelligence} \\ \texttt{calogero.zarba@neodatagroup.com}
}
\date{\sc Friday, May 10, 2013}

% Options
\bibliographystyle{plain}
%\SetWatermarkScale{4}
%\SetWatermarkText{NEODATA}
\theoremstyle{plain}
\theorembodyfont{\upshape}
\theoremsymbol{\rule{1ex}{1ex}}
\theoremseparator{.}
\newtheorem{axiom}{Axiom}
\newtheorem{definition}{Definition}
\newtheorem{theorem}{Theorem}
\theoremheaderfont{\sc}
\theorembodyfont{\upshape}
\theoremstyle{nonumberplain}
\theoremseparator{.}
\theoremsymbol{\rule{1ex}{1ex}}
\newtheorem{proof}{Proof}

% Macros
\DeclareMathOperator{\PROB}{\mathrm{Prob}}
\newcommand{\bbN}{\mathbb{N}}
\newcommand{\bbR}{\mathbb{R}}
\newcommand{\calC}{\mathcal{C}}
\newcommand{\calG}{\mathcal{G}}
\newcommand{\calP}{\mathcal{P}}

% --------------------------------------------------------------------
\begin{document}
\maketitle

\begin{abstract}
We make a formal sociologic study of the concept of free will.  By
using the language of mathematics and logic, we define what we call
everlasting societies.  Everlasting societies never age: persons never
age, and the goods of the society are indestructible.  The infinite
history of an everlasting society unfolds by following deterministic
and probabilistic laws that do their best to satisfy the free will of
all the persons of the society.

We define three possible kinds of histories for everlasting societies:
primitive histories, good histories, and golden histories.

In primitive histories, persons are inherently selfish, and they use
their free will to obtain the personal ownerships of all the goods of
the society.

In good histories, persons are inherently good, and they use their
free will to distribute the goods of the society.  In good histories,
a person is not only able to desire the personal ownership of goods,
but is also able to desire that a good be owned by another person.

In golden histories, free will is bound by the ethic of reciprocity,
which states that ``you should wish upon others as you would like
others to wish upon yourself''.  In golden societies, the ethic of
reciprocity becomes a law that partially binds free will, and that
must be abided at all times.  In other words, the verb ``should''
becomes the verb ``must''.
\end{abstract}

% --------------------------------------------------------------------
\tableofcontents

% --------------------------------------------------------------------
\epigraph{I have free will, but not of my own choice. I have never
  freely chosen to have free will. I have to have free will, whether I
  like it or not!}{\emph{Raymond Smullyan}}

% --------------------------------------------------------------------
\section{Introduction}
``Will'' is an English noun that is a synonym of ``desire''.  ``Free''
is an English adjective that means ``not subject to restrictions''.
Thus, the phrase ``free will'' means ``desire not subject to
restrictions''.  In this paper, we study the concept of free will by
defining a mathematical model of a society in which persons have the
ability of free will.

For us, a society comprises a set of persons, as well as a set of
goods.  At each instant of time, a given good of the society must be
owned by exactly one person.  Thus, for each instant of time, the
society has a social assignment that defines, for each good, the owner
of that good for the given instant of time.  

The kind of society that we study is everlasting, in the sense that
persons never age, and the goods of the society are indestructible.
Therefore, by modelling time as the set of natural numbers, the
infinite history of an everlasting society can be seen as an infinite
sequence of social assignments.

We make the assumption that the persons of the society have free will.
Persons have the ability to express desires over the ownership of the
goods of the society.  Since these desires are free, it is impossible
to avoid non-conflicting desires, and therefore there is an eternal
battle of wills for the ownership of the goods of the society.  At
each instant of time, and for each given good, there is a battle for
the ownership of the given good.  The outcome of this battle is
decided by probabilistic and deterministic laws that take into account
the free will of the persons of the society.  Thus, the free will of
the persons of the society can be seen as a physical force that acts
on the goods of the society.

In our mathematical model, the free will of a person is not entirely
free, but is bound by the power of the person.  The power of a person
is a measure of the ability of the person to generate desires, and
changes over time according to the laws of our mathematical model.

We define three possible kinds of histories for everlasting societies:
primitive histories, good histories, and golden histories.

In primitive histories, persons are inherently selfish, and they use
their free will to obtain the ownership of all the goods of the
society.  When a persons wins a battle over a given good, it is
entitled to own the good for one unit of time, but it does so at the
expense of losing some of her power.  When a persons loses a battle
over a given good, it will not own the good, but gains some power.  In
other words, there is a price to pay for winning a battle, and there
is something to gain when losing a battle.

In good histories, persons are inherently good, and they use their
free will to distribute the goods of the society.  In good histories,
a person is not only able to desire the personal ownership of goods,
but is also able to desire that a good be owned by another person.  In
good histories, a person wins a battle over a good when it is able to
determine the owner of the good, but if a person wins, it does so at
the expense of losing some of her power.  Symmetrically, a persons
loses a battle over a good when it is not able to determine the owner
of the good, but when a person loses, it gains some power.

In golden histories, like in the case of good histories, persons are
inherently good, and they are able to desire both the personal
ownership of goods, as well as the fact that a given good be owned by
another person.  But in golden histories free will is partially bound
by a law of reciprocity.  This law of reciprocity is inspired by the
golden rule, also known as the ethic of reciprocity, which states that
\emph{you should wish upon others as you would like others to wish
  upon yourself}.  In our mathematical model, the ethic of reciprocity
becomes a law that partially binds free will, and must be abided at
all times.  This law can be expressed by the two following statements:
\begin{enumerate}
\item \emph{You must wish in the future upon others, as others in the
  past have wished upon you};

\item \emph{Others must wish in the future upon you, as you in the
  past have wished upon others}.
\end{enumerate}

Our law of reciprocity can be expressed formally with mathematical
language, and can be proved valid in our mathematical model of
everlasting societies.

We do not know if our law of reciprocity is valid in the real world.
We do not even know if free will exists in the real world.  We however
believe that, in the real world, there are universal laws that govern
the concept of desire.  Our mathematical model can be seen as a
desire: it is the desire that free will exists, and that free will be
partially bound by reciprocity.

% --------------------------------------------------------------------
\subsection{Related work}
Free will~\cite{kan2005} has been studied by humanity since at least
the beginning of civilization.  Among the many possible views and
perspectives in the free will debate, the view of this paper can be
classified as a two-stage model~\cite{Jam1884}.  In the first stage,
alternative possibilities are generated.  In the second stage, exactly
one of the possibilities is chosen by an intelligent person capable of
free will.

The internal structure of everlasting societies can be seen as a
social network.  The theory of social networks is part of the emerging
field of computational sociology~\cite{BL2012, GT2005}, a field that
merges classical sociology~\cite{CRT2006} with computer
science~\cite{Tuc2004}.

% --------------------------------------------------------------------
\subsection{Organization of the paper}
The rest of this paper is organized as follows.
Section~\ref{sec:everlasting} defines everlasting societies.
Section~\ref{sec:primitive} defines primitive histories.
Section~\ref{sec:good} defines good histories.
Section~\ref{sec:golden} defines golden histories, and proves the
theorems that express mathematically the validity of the law of
reciprocity for golden histories.  Section~\ref{sec:conclusion}
concludes the paper.

% --------------------------------------------------------------------
\section{Everlasting societies}
\label{sec:everlasting}
\begin{axiom}
There are \emph{persons}.  Denote with $\calP$ the class of all
persons.
\end{axiom}

\begin{definition}
A \textsc{relationship function} over a set $P$ of persons is a function
\begin{equation*}
  \rho : P \times P \to ]0, 1] \,,
\end{equation*}
satisfying the following conditions:
\begin{itemize}
\item $\rho(x, x) = 1$, for all $x \in P$;

\item if $x \neq y$ then $0 < \rho(x, y) < 1$, for all $x, y \in P$;

\item $\rho(x, y) = \rho(y, x)$, for all $x, y \in P$;

\item $\rho(x, z) + \rho(z, y) \le 1 + \rho(x, y)$, for all $x, y, z
  \in P$.
\end{itemize}
\end{definition}

\begin{axiom}
There are \emph{goods}.  Denote with $\calG$ the class of all goods.
\end{axiom}

\begin{definition}
A \textsc{social estate} $E$ is a finite nonempty set of goods.
\end{definition}

\begin{definition}
An \textsc{everlasting society} $S$ is a tuple
\begin{equation*}
  S = (P, \rho, E) \,,
\end{equation*}
where
\begin{itemize}
\item $P \subseteq \calP$ is a finite nonempty set of persons;

\item $\rho: P \times P \to ]0, 1]$ is a relationship function over
    $P$;

\item $E$ is a social estate.
\end{itemize}
\end{definition}

\begin{definition}
Let $S = (P, \rho, E)$ be an everlasting society.  A \textsc{social
  assignment} $\alpha$ for $S$ is any function $\alpha: E \to P$.
\end{definition}

\begin{definition}
Let $S = (P, \rho, E)$ be an everlasting society.  A \textsc{simple
  history} for $S$ is any infinite sequence
\begin{equation*}
  \alpha_0, \alpha_1, \ldots, \alpha_n, \ldots \,,
\end{equation*}
of social assignments for $S$.
\end{definition}

% --------------------------------------------------------------------
\section{Primitive histories}
\label{sec:primitive}
\begin{definition}
Let $S = (P, \rho, E)$ be an everlasting society.  A \textsc{primitive
  power function} for $S$ is any function $\pi: P \to \bbR^+$.
\end{definition}

\begin{definition}
Let $S = (P, \rho, E)$ be an everlasting society.  A \textsc{primitive
  force function} for $S$ is any function $\varphi : P \times E \to
\bbR^+$.
\end{definition}

\begin{definition}
Let $S = (P, \rho, E)$ be an everlasting society.  A \textsc{primitive
  social state} $\sigma$ is a tuple
\begin{equation*}
  \sigma = (\alpha, \pi, \varphi) \,,
\end{equation*}
where
\begin{itemize}
\item $\alpha: E \to P$ is a social assignment for $S$;

\item $\pi: P \to \bbR^+$ is a primitive power function for $S$;

\item $\varphi: P \times E \to \bbR^+$ is a primitive force function
  for $S$ such that
\begin{align*}
  \left[ \displaystyle\sum_{a \in E} \varphi(x, a) \right] < \pi(x)
  \,, &&\text{for all } x \in P \,.
\end{align*}
\end{itemize}
\end{definition}

\begin{definition}
Let $S = (P, \rho, E)$ be an everlasting society, and let $\sigma =
(\alpha, \pi, \varphi)$ be a primitive social state.  The
\textsc{effectiveness function} $\psi$ with respect to $S$ and
$\sigma$ is the function
\begin{equation*}
  \psi : P \times E \to \bbR^+ \,,
\end{equation*}
defined by
\begin{equation*}
  \psi(x, a) = \varphi(x, a) \rho(x, \alpha(a)) \,.
\end{equation*}
\end{definition}

\begin{definition}
Let $S = (P, \rho, E)$ be an everlasting society.  For the sake of
simplicity, we assume in this definition that the estate of the
society contains only one good $a$, that is $E = \{a\}$.

Let $\sigma = (\alpha, \pi, \varphi)$ be a primitive social state.
Another primitive social state $\sigma' = (\alpha', \pi', \varphi')$
is a \textsc{successor} of $\sigma$, if it can be obtained according
to the following laws that regulate how the members of the society
battle for the ownership of good $a$.

The winner of this battle is established using a probabilistic law
that takes into account the force function $\varphi$, and in
particular the effectiveness function $\psi$ with respect to $S$ and
$\sigma$.

For any person $w \in P$, we let
\begin{equation*}
  \PROB [w \textrm{~wins the ownership of good~} a] = 
    \frac{\psi(w, a)}
         {\displaystyle\sum_{y \in P} \psi(y, a)} \,.
\end{equation*}

Next, assume that the battle has been won by person $w$.  Then the
successor state $\sigma' = (\alpha', \pi', \varphi')$ is uniquely
determined in $\alpha'$ and $\pi'$ (but not in $\varphi'$), according
to the following deterministic laws:
\begin{itemize}
\item $\alpha'(a) = w$;

\item $\pi'(w) = \pi(w) - \varphi(w, a)$;

\item if $x \neq w$ then
\begin{equation*}
  \pi'(x) = \pi(x) + \varphi(w, a) \times \frac{\psi(x, a)}{\displaystyle\sum_{\substack{y \in P \\ y \neq w}} \psi(y, a)} \,.
\end{equation*}
\end{itemize}
\end{definition}

\begin{definition}
Let $S = (P, \rho, E)$ be an everlasting society.  In this definition,
there is no simplicity, and we consider the general case in which $|E|
\ge 1$.

Let $\sigma = (\alpha, \pi, \varphi)$ be a primitive social state.
Another primitive social state $\sigma' = (\alpha', \pi', \varphi')$
is a \textsc{successor} of $\sigma$, if it can be obtained according
to the following laws that regulate how the members of the society
battle for the ownership of the goods belonging to the social estate
$E$.

There are as many battles as there are goods in $E$.  The winners of
these battles are established using a probabilistic law that takes
into account the force function $\varphi$, and in particular the
effectiveness function $\psi$ with respect to $S$ and $\sigma$.

For any person $w \in P$, and for any good $a \in E$, we let
\begin{equation*}
  \PROB [w \textrm{~wins the ownership of good~} a] = 
    \frac{\psi(w, a)}
         {\displaystyle\sum_{y \in P} \psi(y, a)} \,.
\end{equation*}

Once all battles have been settled using the probabilistic law, the
successor state $\sigma' = (\alpha', \pi', \varphi')$ is uniquely
determined in $\alpha'$ and $\pi'$ (but not in $\varphi'$) according
to the following deterministic laws.

First, we let $\alpha'(a)$ be equal to the winner of the battle over
good $a$.

Next, let $x \in P$.  Denote with $\mathit{gains}(x)$ the set of goods
that the person $x$ has gained.  More precisely, $\mathit{gains}(x) =
\{ a \in E \mid \alpha'(a) = x\}$.  Denote also with
$\mathit{losses}(x)$ the set of goods that the person $x$ has lost.
More precisely, $\mathit{losses}(x) = \{ a \in E \mid \alpha'(a) \neq
x\} = E \setminus \mathit{gains}(x)$.

Finally, for any person $x \in P$, we let
\begin{align*}
  \pi'(x) &= 
    \pi(x) \\
    &~~~ - \sum_{a \in \mathit{gains(x)}} \varphi(x, a) \\
    &~~~ + \sum_{a \in \mathit{losses(x)}} 
             \varphi(\alpha'(a), a) 
             \times 
             \frac{\psi(x, a)}
                  {\displaystyle\sum_{\substack{y \in P \\ y \neq \alpha'(a)}} \psi(y, a)} \,.
\end{align*}
\end{definition}

\begin{definition}
Let $S = (P, \rho, E)$ be an everlasting society.  A \textsc{primitive
  history} for $S$ is an infinite sequence of primitive social states
\begin{equation*}
  \sigma_0, \sigma_1, \ldots, \sigma_n, \ldots \,,
\end{equation*}
where $\sigma_{t+1}$ is a successor of $\sigma_t$, for all $t \in
\bbN$.
\end{definition}

% --------------------------------------------------------------------
\section{Good histories}
\label{sec:good}
\begin{definition}
Let $S = (P, \rho, E)$ be an everlasting society.  A \textsc{good
  power function} for $S$ is any function $\pi : P \times E \times P
\to \bbR^+$.
\end{definition}

\begin{definition}
Let $S = (P, \rho, E)$ be an everlasting society.  A \textsc{good
  force function} for $S$ is any function $\varphi : P \times E \times
P \to \bbR^+$.
\end{definition}

\begin{definition}
Let $S = (P, \rho, E)$ be an everlasting society.  A \textsc{good social
  state} $\sigma$ is a tuple
\begin{equation*}
  \sigma = (\alpha, \pi, \varphi) \,,
\end{equation*}
where
\begin{itemize}
\item $\alpha: E \to P$ is a social assignment for $S$;

\item $\pi : P \times E \times P \to ]0, 1[$ is a good power function
      for $S$;

\item $\varphi : P \times E \times P \to ]0, 1[$ is a good force
    function for $S$ such that
\begin{align*}
  \varphi(x, a, y) < \pi(x, a, y) \,,
  &&\text{for all } x, y \in P \text{ and } a \in E \,.
\end{align*}
\end{itemize}
\end{definition}

\begin{definition}
Let $S = (P, \rho, E)$ be an everlasting society, and let $\sigma =
(\alpha, \pi, \varphi)$ be a good social state.  The
\textsc{effectiveness function} $\psi$ with respect to the $S$ and
$\sigma$ is the function
\begin{equation*}
  \psi : P \times E \times P \to \bbR^+ \,,
\end{equation*}
defined by
\begin{equation*}
  \psi(x, a, y) = \varphi(x, a, y) \rho(x, \alpha(a)) \rho(\alpha(a), y) \,.
\end{equation*}
\end{definition}

\begin{definition}
Let $S = (P, \rho, E)$ be an everlasting society.  In this definition,
we consider the general case in which $|E| \ge 1$.

Let $\sigma = (\alpha, \pi, \varphi)$ be a good social state.  Another
good social state $\sigma' = (\alpha', \pi', \varphi')$ is a
\textsc{successor} of $\sigma$, if it can be obtained according to the
following laws that regulate how the members of the society battle
over the ownership of the goods belonging to the social estate $E$.

There are as many battles as there are goods in $E$.  The winners of
these battles are established using a probabilistic law that takes
into account the force function $\varphi$, and in particular the
effectiveness function $\psi$ with respect to $S$ and $\sigma$.

For any person $w \in P$, and for any good $a \in E$, we let
\begin{equation*}
  \PROB [w \textrm{~wins the ownership of good~} a] = 
    \frac{\displaystyle\sum_{y \in P} \psi(y, a, w)}
         {\displaystyle\sum_{y, z \in P} \psi(y, a, z)} \,.
\end{equation*}

Once all battles have been settled using the probabilistic law, the
successor state $\sigma = (\alpha', \pi', \varphi')$ is uniquely
determined in $\alpha'$ and $\pi'$ (but not in $\varphi'$), according
to the following deterministic laws.

First, we let $\alpha'(a)$ be equal to the winner of the battle over
good $a$.  Then, for any persons $x, y \in P$, and for any good $a \in
E$, we let
\begin{equation*}
  \pi'(x, a, y) = \pi(x) - \varphi(x, a, y) + \varphi(y, a, x) \,.
\end{equation*}
\end{definition}

\begin{definition}
Let $S = (P, \rho, E)$ be an everlasting society.  A \textsc{good
  history} for $S$ is an infinite sequence of good social states
\begin{equation*}
  \sigma_0, \sigma_1, \ldots, \sigma_n, \ldots \,,
\end{equation*}
where $\sigma_{t+1}$ is a successor of $\sigma_t$, for all $t \in \bbN$.
\end{definition}
  
% --------------------------------------------------------------------
\section{Golden histories}
\label{sec:golden}
\begin{axiom}
There are \emph{force carriers}.  Denote with $\calC$ the class of all
force carriers.
\end{axiom}

\begin{axiom}
Every force carrier $c$ has an \emph{intensity} $\mu(c) \in \bbR^+$.
\end{axiom}

\begin{axiom}
Every force carrier $c$ has a \emph{maximum idle period} $\theta(c)
\in \bbN^+$.
\end{axiom}

\begin{axiom}
Let $C \subseteq \calC$ be a finite nonempty set of force carriers.
An \textsc{idle function} for $C$ is any function $\tau: C \to \bbN$.
\end{axiom}

\begin{definition}
Let $S = (P, \rho, E)$ be an everlasting society.  A \textsc{golden
  power function} $\pi$ for $S$ is a function
\begin{equation*}
  \pi : P \times E \times P \to 2^C \,,
\end{equation*}
such that
\begin{itemize}
\item $C \subseteq \calC$ is a finite nonempty set of force carriers;

\item $\left[ \displaystyle\bigcup_{\substack{x, y \in P \\ a \in E}}
  \pi(x, a, y) \right] = C$;

\item if $(x_1, a_1, y_1) \neq (x_2, a_2, y_2)$ then $\pi(x_1, a_1,
  y_1) \cap \pi(x_2, a_2, y_2) = \emptyset$, for all $x_1, x_2, y_1,
  y_2 \in P$ and $a_1, a_2 \in E$.
\end{itemize}
\end{definition}

\begin{definition}
Let $S = (P, \rho, E)$ be an everlasting society, and let $\pi : P
\times E \times P \to 2^C$ be a golden power function for $S$.  The
\textsc{extended golden power function} $\pi^*$ with respect to $\pi$
is the function
\begin{equation*}
  \pi^* : 2^P \times E \times 2^P \to 2^C \,,
\end{equation*}
defined by
\begin{align*}
  \pi^*(X, a, Y) = \displaystyle\bigcup_{\substack{x \in X \\ y \in Y}} \pi(x, a, y) \,,
  &&\text{for all } X, Y \subseteq P \text{ and } a \in E \,.
\end{align*}
\end{definition}

\begin{definition}
Let $S = (P, \rho, E)$ be an everlasting society.  A \textsc{golden force
function} $\varphi$ for $S$ is a function
\begin{equation*}
  \varphi : P \times E \times P \to 2^C \,,
\end{equation*}
where $C \subseteq \calC$ is a finite nonempty set of force carriers.
\end{definition}

\begin{definition}
Let $S = (P, \rho, E)$ be an everlasting society, and let $\varphi : P
\times E \times P \to 2^C$ be a golden force function for $S$.  The
\textsc{extended golden force function} $\varphi^*$ with respect to
$\varphi$ is the function
\begin{equation*}
  \varphi^* : 2^P \times E \times 2^P \to 2^C \,,
\end{equation*}
defined by
\begin{align*}
  \varphi^*(X, a, Y) = \displaystyle\bigcup_{\substack{x \in X \\ y \in Y}} \varphi(x, a, y) \,,
  &&\text{for all } X, Y \subseteq P \text{ and } a \in E \,.
\end{align*}
\end{definition}

\begin{definition}
\label{def:golden_sigma}
Let $S = (P, \rho, E)$ be an everlasting society.  A \textsc{golden
  social state} $\sigma$ is a tuple
\begin{equation*}
  \sigma = (\alpha, \pi, \tau, \varphi)
\end{equation*}
where
\begin{itemize}
\item $\alpha: E \to P$ is a social assignment for $S$;

\item $\pi : P \times E \times P \to 2^C$ is a golden power function
  for $S$, where $C$ is a finite nonempty set of force carriers;

\item $\tau : C \to \bbN$ is an idle function for $C$;

\item $\tau(c) \le \theta(c)$, for all $c \in C$;

\item $\varphi : P \times E \times P \to 2^C$ is a golden force
  function such that
\begin{align*}
  \varphi(x, a, y) \subseteq \pi(x, a, y) \,, 
  &&\text{for all } x, y \in P \text{ and } a \in E \,;
\end{align*}

\item for any force carrier $c \in C$, if $\tau(c) = \theta(c)$ then
  $c \in \varphi(x, a, y)$, where $(x, a, y) \in P \times E \times P$
  is the unique tuple such that $c \in \pi(x, a, y)$.
\end{itemize}
\end{definition}

\begin{definition}
Let $S = (P, \rho, E)$ be an everlasting society, and let $\sigma =
(\alpha, \pi, \tau, \varphi)$ be a golden social state.  The
\textsc{effectiveness function} $\psi$ with respect to $S$ and
$\sigma$ is the function
\begin{equation*}
  \psi : P \times E \times P \to \bbR^+ \,,
\end{equation*}
defined by
\begin{align*}
  \psi(x, a, y) = \displaystyle\sum_{c \in \varphi(x, a, y)} \mu(c) \rho(x, \alpha(a)) \rho(\alpha(a), y) \,.
\end{align*}
\end{definition}

\begin{definition}
Let $S = (P, \rho, E)$ be an everlasting society, let $\sigma =
(\alpha, \pi, \tau, \varphi)$ be a golden social state, and let $\psi
: P \times E \times P \to \bbR^+$ be the effectiveness function with
respect to $S$ and $\sigma$.  The \textsc{extended effectiveness
  function} $\psi^*$ with respect to $\psi$ is the function
\begin{equation*}
  \psi^* : 2^P \times E \times 2^P \to \bbR^+
\end{equation*}
defined by
\begin{align*}
  \psi^*(X, a, Y) = \displaystyle\sum_{\substack{x \in X \\ y \in Y}} \psi(x, a, y) \,,
  &&\text{for all } X, Y \subseteq P \text{ and } a \in E \,.
\end{align*}
\end{definition}

\begin{definition}
\label{def:golden_succ}
Let $S = (P, \rho, E)$ be an everlasting society.  In this definition, 
we consider the general case in which $|E| \ge 1$.

Let $\sigma = (\alpha, \pi, \tau, \varphi)$ be a golden social state,
and let $\sigma' = (\alpha', \pi', \tau', \varphi')$ be another golden
social state.  Assume that $\tau$ and $\tau'$ have the same domain
$C$.  We say that $\sigma'$ is a \textsc{successor} of $\sigma$, if it
can be obtained according to the following laws that regulate how the
members of the society battle over the ownership of the goods
belonging to the social estate $E$.

There are as many battles as there are goods in $E$.  The winners of
these battles are established using a probabilistic law that takes
into account the force function $\varphi$, and in particular the
effectiveness function $\psi$ with respect to $S$ and $\sigma$.

For any person $w \in P$, and for any good $a \in E$, we let
\begin{equation*}
  \PROB [w \textrm{~wins the ownership of good~} a] = 
  \begin{cases}
    \frac{\psi^*(P, a, \{w\})} {\psi^*(P, a, P)} \,, &\text{if } \psi^*(P, a, P) \neq 0 \,, \\
    1                                            \,, &\text{if } \psi^*(P, a, P) = 0 \text{ and } w = \alpha(a) \,, \\ 
    0                                            \,, &\text{if } \psi^*(P, a, P) = 0 \text{ and } w \neq \alpha(a) \,. \\ 
  \end{cases}
\end{equation*}

Once all battles have been settled using the probabilistic law, the
successor state $\sigma = (\alpha', \pi', \tau', \varphi')$ is
uniquely determined in $\alpha'$, $\pi'$, and $\tau'$ (but not in
$\varphi')$ according to the following deterministic laws.

First, we let $\alpha'(a)$ be equal to the winner of the battle over
good $a$.  

Then, for any persons $x, y \in P$, and for any good $a \in E$, we let
\begin{equation*}
  \pi'(x, a, y) = 
  (\pi(x, a, y) \setminus \varphi(x, a, y)) \cup \varphi(y, a, x) \,.
\end{equation*}

Finaly, for any force carrier $c \in C$, let $(x, a, y) \in P \times E
\times P$ be the unique tuple such that $c \in \pi(x, a, y)$.  We let
\begin{align*}
  \tau'(c) = 
  \begin{cases}
    0           \,, &\text{if } c \in \varphi(x, a, y) \,, \\
    \tau(c) + 1 \,, &\text{if } c \notin \varphi(x, a, y) \,.
  \end{cases}
\end{align*}
\end{definition}

\begin{definition}
Let $S = (P, \rho, E)$ be an everlasting society.  A \textsc{golden
  history} for $S$ is an infinite sequence of golden social states
\begin{equation*}
  \sigma_0, \sigma_1, \ldots, \sigma_n, \ldots \,,
\end{equation*}
where $\sigma_{t+1}$ is a successor of $\sigma_t$, for all $t \in \bbN$.
\end{definition}

\begin{theorem}
\label{th:golden1}
Let $S = (P, \rho, E)$ be an everlasting society, and let
$\{\sigma_t\}_{t \in \bbN}$ be a golden history for $S$.  Let
$\sigma_{t_1} = (\alpha_{t_1}, \pi_{t_1}, \tau_{t_1}, \varphi_{t_1})$,
and assume $c \in \varphi_{t_1}(x, a, y)$, where $x, y \in P$ and $a
\in E$.

Then there exists $t_2 \in \bbN$ such that $t_2 > t_1$ and $c \in
\varphi_{t_2}(y, a, x)$.
\end{theorem}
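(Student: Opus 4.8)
The plan is to follow the single force carrier $c$ through the history, exploiting two properties built into the definitions: while $c$ is idle it never leaves the cell of the power partition it currently occupies, and the idle counter $\tau(c)$ cannot remain strictly below $\theta(c)$ forever. First some bookkeeping: by Definition~\ref{def:golden_succ} all states of a golden history share one idle-function domain $C$, and each golden power function $\pi_t$ partitions $C$ into the cells $\{\pi_t(u,b,v) : u,v\in P,\ b\in E\}$; hence $c$ lies in exactly one such cell at every time $t$, and since $\varphi_t(u,b,v)\subseteq\pi_t(u,b,v)$, the phrase ``$c$ is active at time $t$'', meaning $c\in\varphi_t(u,b,v)$ for that unique cell, is well defined. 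The hypothesis $c\in\varphi_{t_1}(x,a,y)$ thus says that at time $t_1$ the carrier $c$ lives in cell $(x,a,y)$ and is active there.

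Next I would analyse one application of the update $\pi'(u,b,v)=(\pi(u,b,v)\setminus\varphi(u,b,v))\cup\varphi(v,b,u)$. Because the cells of $\pi$ are pairwise disjoint, $c$ can enter $\pi'(u,b,v)$ only through the term $\varphi(v,b,u)$, and only when $\varphi(v,b,u)$ is the active part of the cell that currently contains $c$. Applying this at $t_1$: from $c\in\varphi_{t_1}(x,a,y)$ we get $c\in\pi_{t_1+1}(y,a,x)$ and $c\notin\pi_{t_1+1}(u,b,v)$ for every other $(u,b,v)$, so immediately after $t_1$ the carrier $c$ occupies the mirror cell $(y,a,x)$. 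Applying the same step at a later time $t$ at which $c$ is idle and $c\in\pi_t(y,a,x)$: then $c\notin\varphi_t(y,a,x)$, so $c$ survives in $\pi_t(y,a,x)\setminus\varphi_t(y,a,x)\subseteq\pi_{t+1}(y,a,x)$, while the only cell that could receive $c$ from elsewhere would do so via $\varphi_t(x,a,y)\subseteq\pi_t(x,a,y)$, which does not contain $c$. By induction, $c$ stays in cell $(y,a,x)$ for as long as it remains idle after $t_1$.

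It remains to see that $c$ must become active again. From the $\tau$-update and the fact that $c$ is active at $t_1$ we get $\tau_{t_1+1}(c)=0$; and if $c$ were idle throughout $t_1+1,\dots,t_1+j$ then $\tau_{t_1+1+j}(c)=j$ by induction, which for $j=\theta(c)+1$ contradicts the constraint $\tau(c)\le\theta(c)$ in the definition of a golden social state. Hence $\{\,t>t_1 : c \text{ is active at } t\,\}$ is nonempty; let $t_2$ be its minimum, so in fact $t_2\le t_1+\theta(c)+1$. At every time strictly between $t_1$ and $t_2$ the carrier $c$ is idle, so by the previous paragraph $c\in\pi_{t_2}(y,a,x)$, and since $c$ is active at $t_2$ this gives $c\in\varphi_{t_2}(y,a,x)$, which is the claim. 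There is no deep obstacle here beyond a careful chase through the update rules; the delicate point is the verification that an idle carrier cannot drift into some other cell, since that is precisely what guarantees the reactivation occurs in the mirror cell $(y,a,x)$ and nowhere else, and it rests on the disjointness of the cells of $\pi$ together with the containment $\varphi\subseteq\pi$.
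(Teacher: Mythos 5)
Your proof is correct and follows essentially the same route as the paper's: track the carrier $c$ into the mirror cell $\pi_{t_1+1}(y,a,x)$ with $\tau_{t_1+1}(c)=0$, observe that it cannot leave that cell while inactive, and use the maximum idle period $\theta(c)$ to force $c\in\varphi_{t_2}(y,a,x)$ at some later time. Your write-up merely makes explicit the induction (the carrier staying put while idle, the counter incrementing) that the paper's proof leaves implicit, and derives the contradiction from $\tau(c)\le\theta(c)$ rather than from the clause forcing activity when $\tau(c)=\theta(c)$, which is an inessential variation.
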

\begin{proof}
Since $c \in \varphi_{t_1}(x, a, y)$, by
Definition~\ref{def:golden_succ}, we have $c \in \pi_{t_1 + 1}(y, a,
x)$ and $\tau_{t_1+1}(c) = 0$.  By contradiction, assume that $c
\notin \varphi_{t}(y, a, x)$, for all $t \in \bbN$ such that $t >
t_1$.  But then, by Definition~\ref{def:golden_succ}, there exists
$t_2 \in \bbN$ such that $t_2 > t_1$, $c \in \pi_{t_2}(y, a, x)$,
and $\tau_{t_2}(c) = \theta(c)$.  But then, by
Definition~\ref{def:golden_sigma}, we have $c \in \varphi_{t_2}(y, a, x)$,
contradiction.
\end{proof}

\begin{theorem}
\label{th:golden2}
Let $S = (P, \rho, E)$ be an everlasting society, and let
$\{\sigma_t\}_{t \in \bbN}$ be a golden history for $S$.  Let
$\sigma_{t_1} = (\alpha_{t_1}, \pi_{t_1}, \tau_{t_1}, \varphi_{t_1})$,
and assume $c \in \varphi^*_{t_1}(X, a, Y)$, where $X, Y \subseteq P$
and $a \in E$.

Then there exists $t_2 \in \bbN$ such that $t_2 > t_1$ and $c \in
\varphi^*_{t_2}(Y, a, X)$.
\end{theorem}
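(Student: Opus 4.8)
The plan is to reduce the statement directly to Theorem~\ref{th:golden1} by unwinding the definition of the extended golden force function. The only content here is bookkeeping about which arguments get swapped.

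First I would unpack the hypothesis. Since $c \in \varphi^*_{t_1}(X, a, Y)$ and, by definition, $\varphi^*_{t_1}(X, a, Y) = \bigcup_{x \in X,\, y \in Y} \varphi_{t_1}(x, a, y)$, there exist witnesses $x \in X$ and $y \in Y$ such that $c \in \varphi_{t_1}(x, a, y)$.

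Next I would apply Theorem~\ref{th:golden1} to this particular triple $(x, a, y)$: it yields some $t_2 \in \bbN$ with $t_2 > t_1$ and $c \in \varphi_{t_2}(y, a, x)$. Finally I would repack: since $y \in Y$ and $x \in X$, the triple $(y, a, x)$ is one of the triples over which the union defining $\varphi^*_{t_2}(Y, a, X)$ is taken, so $\varphi_{t_2}(y, a, x) \subseteq \varphi^*_{t_2}(Y, a, X)$, and therefore $c \in \varphi^*_{t_2}(Y, a, X)$. This is exactly the desired conclusion.

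There is no real obstacle; the only point to watch is that the role reversal $X \leftrightarrow Y$ in the statement corresponds precisely to the role reversal $x \leftrightarrow y$ in Theorem~\ref{th:golden1} applied to the chosen witnesses, so the two swaps line up. I would keep the proof to these three short steps and not belabor it.
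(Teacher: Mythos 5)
Your proposal is correct and follows essentially the same route as the paper: extract witnesses $x \in X$, $y \in Y$ with $c \in \varphi_{t_1}(x, a, y)$, apply Theorem~\ref{th:golden1}, and conclude $c \in \varphi^*_{t_2}(Y, a, X)$ from the defining union. If anything, your version is slightly more explicit than the paper's in noting that the witnesses lie in $X$ and $Y$ (the paper only says $x, y \in P$), which is the detail actually needed for the final inclusion.
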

\begin{proof}
Since $c \in \varphi^*_{t_1}(X, a, Y)$, there exist $x, y \in P$ such
that $c \in \varphi_{t_1}(x, a, y)$.  By Theorem~\ref{th:golden1},
there exists $t_2 \in \bbN$ such that $t_2 > t_1$ and $c \in
\varphi_{t_2}(y, a, x)$.  But then, $c \in \varphi^*_{t_2}(Y, a, X)$.
\end{proof}

% --------------------------------------------------------------------
\section{Conclusion}
\label{sec:conclusion}
We have defined a mathematical model of everlasting societies in which
persons never age and goods are indestructible.  The history of an
everlasting society is governed by probabilistic and deterministic
laws that take into account the free will of the persons of the
society.

We have defined three possible kinds of histories for an everlasting
society: primitive histories, good histories, and golden histories.
The most interesting case is that of golden histories.  In golden
histories, free will is not entirely free, but is bound by a law of
reciprocity that is inspired by the golden rule, or ethic of
reciprocity.

Our law of reciprocity can be expressed by the two following
statements:
\begin{enumerate}
\item \emph{You must wish in the future upon others, as others in the
  past have wished upon you};

\item \emph{Others must wish in the future upon you, as you in the
  past have wished upon others}.
\end{enumerate}
In our mathematical model, statements 1 and 2 are expressed by
Theorem~\ref{th:golden2}.  Loosely speaking, the theorem states a
relationship between an instant of time $t_1$ in the past and an
instant of time $t_2$ in the future.  The theorem assumes that in the
past $c \in \varphi_{t_1}(X, a, Y)$, for some force carrier $c$, some
groups of persons $X, Y \subseteq P$, and some good $a$.  Given this
assumption, the theorem states that in the future we must have $c \in
\varphi_{t_2}(Y, a, X)$.  Intuitively, $\varphi_{t_1}(X, a, Y)$ is
what in the past $X$ has desired on $Y$, whereas $\varphi_{t_2}(Y, a,
X)$ is what in the future $Y$ must desire on $X$.  It follows that
Theorem~\ref{th:golden2} paraphrases statement~1 if we let $X = P
\setminus \{ \mathit{you} \}$ and $Y = \{ \mathit{you} \}$.
Similarly, Theorem~\ref{th:golden2} paraphrases statement~2 if we let
$X = \{ \mathit{you} \}$ and $Y = P \setminus \{ \mathit{you} \}$.

The laws of our mathematical model are good, in the sense that they
attempt to satisfy all desires of the persons of the society.
Moreover, in the case of golden histories, our laws are reciprocal.
However, our laws are not fair.

In our model, persons generate their desire by defining a force
function at each instant of time.  This force function is contrained
by the power function, and different persons have different power.
Moreover, the effectiveness of the force function depends on the
structure of the society, namely on the strength of the relationships
between persons.  Thus, the ability of a person to have their desires
satisfied, depends also on the social position that the person has in
the social network representing the society.

A direction of further research is that to define a mathematical model
that extends the mathematical model of this paper, and that is able to
define laws that are fair.  Here, by fairness we mean that persons
should start on equal footing, and that two distinct persons must have
the same chance to have their desires satisfied, provided only that
they behave intelligently.  

This paper, as stated in the introduction, is a desire.  It is the
desire that the real world be governed by laws that allow the
existence of free will, and that are fundamentally good, reciprocal,
and fair.  We do not know if the real world is good, reciprocal, and
fair.  We just desire that it be so.

%% It is our strong belief that the real world is governed by fundamental
%% laws that are inherently good, reciprocal, and fair.  The discovery
%% of such laws is undoubtedly the main quest of humanity.

% --------------------------------------------------------------------
\bibliography{wills}

\begin{thebibliography}{1}

\bibitem{BL2012}
Phillip Bonacich and Philip Lu.
\newblock {\em Introduction to Mathematical Sociology}.
\newblock Princeton University Press, 2012.

\bibitem{CRT2006}
Craig Calhoun, Chris Rojek, and Bryan~S. Turner, editors.
\newblock {\em The {SAGE} Handbook of Sociology}.
\newblock Sage Publications, 2006.

\bibitem{GT2005}
Nigel Gilbert and Klaus Troitzsch.
\newblock {\em Simulation for the Social Scientist}.
\newblock Open University Press, second edition, 2005.

\bibitem{Jam1884}
William James.
\newblock The dilemma of determinism.
\newblock {\em Unitarian Review}, 1884.

\bibitem{kan2005}
Robert Kane, editor.
\newblock {\em The Oxford Handbook of Free Will}.
\newblock Oxford University Press, 2005.

\bibitem{Tuc2004}
Allen~B. Tucker, editor.
\newblock {\em Computer Science Handbook}.
\newblock Chapman and Hall, second edition, 2004.

\end{thebibliography}
\end{document}